\documentclass[10pt,conference]{IEEEtran}
\usepackage{amsmath}
\usepackage[utf8]{inputenc}
\usepackage{pdfpages}
\usepackage{graphicx}
\usepackage{amsmath}
\usepackage{epstopdf}
\usepackage{amssymb}
\usepackage{mathtools}
\usepackage{mathtools}
\usepackage{amsmath}
\usepackage{amstext}
\usepackage{amssymb}
\usepackage{amsfonts}
\usepackage{bm}
\usepackage{algorithm}
\usepackage{algorithmic}

\usepackage{tabularx}
\usepackage{latexsym}
\usepackage{paralist}
\usepackage{float}
\usepackage{enumerate}

\usepackage{paralist}
\usepackage{amsopn}
\usepackage{mathrsfs}
\usepackage{mathtools}
\usepackage{amsthm}
\usepackage{amssymb}
\usepackage{amsfonts}
\usepackage{comment}
\usepackage{xspace} 
\usepackage{relsize}
\usepackage{float}
\usepackage{bbm}

\theoremstyle{plain}
 
\newtheorem{lemma}{Lemma}

\newtheorem{cor}{Corollary}

\theoremstyle{remark}

\newcommand{\Files}{\mathcal{F}}

\newcommand{\BS}{\mathcal{M}}
\newcommand{\bs}{m}
\newcommand{\numBS}{M}

\newcommand{\numUT}{N}
\newcommand{\R}{\mathbb{R}}

\newcommand{\fa}{\hspace{5pt}\forall\hspace{2pt}}

\newcommand{\Regions}{\mathcal{S}}
\newcommand{\region}{s}

\newcommand{\RegFile}{\mathcal{Q}}

\newcommand{\numRegFiles}{Q}
\newcommand{\crpConst}{a}

\newcommand{\norm}[1]{\left\lVert#1\right\rVert}

\DeclareMathOperator*{\argmax}{arg\,max}

\DeclareMathOperator{\coverBS}{\BS}

\DeclareMathOperator{\util}{U}



\begin{document}

\title{Fair distributed user-traffic association in cache equipped cellular networks}


\author{\IEEEauthorblockN{Jonatan Krolikowski\IEEEauthorrefmark{1}, Anastasios Giovanidis\IEEEauthorrefmark{2} and Marco Di Renzo\IEEEauthorrefmark{1}} 
\IEEEauthorblockA{\IEEEauthorrefmark{1}CNRS-L2S, CentraleSup{\'e}lec, Université Paris Sud, Universit{\'e} Paris-Saclay\\
\IEEEauthorrefmark{2}CNRS-LIP6, Universit{\'e} Pierre et Marie Curie, Sorbonne Universit{\'e}s, Paris, France\\
Email: jonatan.krolikowski@telecom-paristech.fr, anastasios.giovanidis@lip6.fr, marco.direnzo@lss.supelec.fr}}

\maketitle

\date{\today}

\maketitle
\begin{abstract}
Caching of popular content on wireless nodes is recently proposed as a means to reduce congestion in the backbone of cellular networks and to improve Quality of Service. From a network point of view, the goal is to offload as many users as possible from the backbone network to  the wireless caches while at the same time offering good service to cache-unrelated users. Aggressive offloading  can lead to  an unbalanced user association. Some wireless nodes  can be overloaded by cache-related traffic while   the resources of  others remain underused.

Given a fixed content placement, this work proposes an efficient distributed  algorithm to control and balance the association of cache-related traffic among  cellular cache memories. The algorithm allows the network to achieve the globally optimal solution and can be  executed  on  base stations using a limited amount of information exchange between them. It is based on a novel algorithm we call \emph{Bucket-filling}. The solution limits the cache-users per node by  balancing the total load among the nodes in a fair way. 
The improvement compared to common user assignment policies is highlighted    for  single- as well as for multi-tier random networks. 
\end{abstract}

\section{Introduction}
 
By 2020, wireless data traffic is estimated to reach roughly the 8-fold of its volume of 2015~\cite{Cisco2016}. 
Such increase in data demand will be satisfied by densifying the network with new tiers
as well as by allowing cooperation among  stations. 
However, this increase of wireless traffic can pose new problems to the wireless backhaul network that are related to congestion. 
A trending strategy to ease the backhaul is to equip wireless nodes with large cheap cache memories \cite{poularakis2016}. Content can be stored on them based on some  knowledge of the popularity of the content requested by the users in the cellular network. The content  is thereby brought closer to the user and the costly usage of backhaul bandwidth is reduced. 

 
Since not all requests can be served by the caches, wireless traffic will be divided into (1) cache-related and (2) cache-unrelated traffic. The former can retrieve the requested content from the cache of a node whose signal can be sufficiently received. The latter will typically connect to the node with the strongest signal. As a result, wireless nodes with popular content could be overloaded with cache-related traffic, leaving no resources for cache-unrelated users. At the same time, other nodes with less popular content could be underused.

A solution for such imbalance can be to limit the maximum cache-related load per wireless node and redistribute, when possible, the remaining load among neighboring nodes. This way, resources can be kept available for cache-unrelated traffic which is less predictable. Note that although multicasting of cached content could reduce the effective traffic load, this approach also has its limitations due to service delay tolerance \cite{poularakis2016a}. Thus, the solution of the above association problem remains necessary.

 
The existing literature has not adequately dealt with such questions yet. It focuses on how to determine good content placement policies with respect to certain optimality criteria such as the hit ratio or  user delays.  User association is handled in different ways: 
In \cite{Borst2010}, the authors use caching to minimize bandwidth cost in a tree-like network. The routing decisions of  users are, however, independent of each other. 
The authors of~\cite{Caire2013} associate users to any covering station that caches their requested content without balancing the traffic loads. 
 In other works~\cite{Blaszczyszyn2014}\cite{Bastug2015}, users are associated to the closest base station, not knowing if the requested content is stored in the cache or not. 
The authors of \cite{Poularakis2014b} maximize the hit ratio by means of integer optimization. They introduce a bandwidth constraint limiting   the amount of users that can be connected to each cellular station.  While this model avoids  overloading  cellular stations, it cannot guarantee a balanced routing among them. 
In \cite{Dehghan2015},  user association is balanced between a cached and an uncached path. Association to the individual caches is modeled by shortest distance, again not allowing control over the use of the separate resources.
The model in~\cite{Naveen2015} includes both fractional content placement and routing variables and allows for the balancing of user traffic loads at the cache-equipped base stations. In the  solution of the problem, however, the convergence to an optimal routing is dependent on iterative fractional content placement updates, which is decided for  the entire network.  

In this paper, we model the problem of associating users to stations with given cached content by introducing a utility function per station which puts soft limitation on the served user load.
The related optimization problem  allows to enforce a load fairness criterion between caches. We further develop a policy to optimally solve this problem in a distributed way.  For a given cache placement, the resulting policy guarantees that all resources are used as evenly as possible.  The calculations can be executed 
 on the individual stations  requiring a limited amount of information exchange. 
 We show that  the policy is beneficial both in single- and in multi-tier networks.



The remainder of this paper is organized as follows: 
Section~\ref{section:model} presents  the network model and the  problem formulation. In Section~\ref{section:solution}, we introduce the solution techniques: Augmented Lagrangian, Diagonal Quadratic Approximation and the novel Bucket-filling algorithm. 
 Numerical evaluations of the resulting  policy  are presented in Section~\ref{section:numerical} for random networks (single-tier and 2-tier) with varying coverage or node density. Finally, Section~\ref{section:conclusion} concludes our work.

\section{System Model and Problem statement}

\label{section:model}

Consider the downlink of a cellular communications network with a finite set $\BS = \lbrace \bs_1, \ldots, \bs_{\numBS}\rbrace$ of Cached Base Stations (CBSs).  
 Every CBS is equipped with a  cache memory 
  in which content files from a finite catalog 
  $\Files$    
  are stored. 
 Note that other base stations not equipped with caches (UBSs) can be part of the network (not considered in  $\BS$). Each CBS has a coverage area, and areas of different CBSs can overlap. 
 
A user arrives in  a position on the plane  and  requests  a file   $f\in\Files$.  
 The user will be covered (or not) by some CBS. In case he/she is covered by some CBS with the requested content in the cache, the user will be served from the caches. This traffic is called \emph{cache-related}. The user association is assumed unique in the sense that an association to two or more CBSs is not allowed. When a user request occurs in the overlap of two or more coverage areas, it can be associated to any one of the covering CBSs. The exact distance of the user from the covering station (e.g.\ near the center or at the edge of the cell) is not taken into consideration. 
The users who do not find their requested content cached, and the users not covered by any CBS at all, constitute the \emph{cache-unrelated} traffic. This traffic will be served either from covering CBSs or UBSs. In this work, we will not treat the question how to distribute its load among the stations.
The boundaries of the overlapping coverage areas of the CBSs partition 
the plane into a set of disjoint regions: Two points 
are in the same region if and only if they are in the coverage areas of exactly the same subset  of CBSs. Let $\Regions$ denote the set of all regions 
 covered by at least one CBS. 

The cache placement is given apriori. 
Tuples of region $\region$ and file $f$ are called \emph{region-files} $q=(\region, f)$ if there is a CBS covering $\region$ with $f$ in its cache. 
We   denote the complete set of region-files by
 $\RegFile$. Not included in  $\RegFile$ are requests for files not stored in a covering CBS. 
The subset of CBSs that cache content $f$ and cover region $\region$  is denoted by $\coverBS(q)\subseteq \BS$. 
Conversely, $\RegFile(\bs)\subseteq\RegFile$ are the region-files which can be served by the cache of $\bs$.


For delay and backhaul congestion reasons, it is always more beneficial to serve a request from a cache than through the backhaul network.   Therefore, in an optimal solution of the routing problem, all traffic of $q\in\RegFile$ is served from the cache of a CBS: For each region-file $q=(\region, f)\in\RegFile$ there is a station covering $\region$ with $f$ in its cache. 
  As a consequence, all  user requests  realted to region-files in $\RegFile$ (and only they) are cache-related traffic.

For each region-file $q\in\RegFile$, the expected number of user requests is denoted by $\numUT_{q}\in\R_{+}$. The vector of these popularity values is denoted by $\mathbf{\numUT} \coloneqq (\numUT_{q}), q\in\RegFile$. 
The information on file popularity is considered locally available at each covering CBS. Note that the model does not assume spatially uniform traffic and that the vector $\mathbf{\numUT}$ is general.
  
We introduce the routing variable  ${y_{\bs,q}\geq 0}$  for all ${\bs\in\BS}, {q\in\RegFile(\bs)}$. Each $y_{\bs,q}$ with $q=(\region,f)$ denotes the part  of expected traffic for content file $f$  in region $\region$   associated to CBS $\bs$.
The vector of all routing variables is denoted by $\mathbf{y}$.
For each $q\in\RegFile$, the sum of all routing variables towards the covering CBSs must be smaller than or  equal to $\numUT_{q}$. 
Formally, for every feasible solution:
 \begin{align}
 \label{CRP:routing}
\sum_{\bs\in \coverBS(q) }y_{\bs,q}  \leq \numUT_{q}, \fa q \in \RegFile.
\end{align}





  The total traffic volume associated with CBS $\bs$ depends on the partial vector $\mathbf{y}_{\bs}\coloneqq (y_{\bs,q}),q\in\RegFile(m)$ 
 and can be expressed as follows:
\begin{align*}
v_{\bs}(\mathbf{y}_{\bs}) \coloneqq \sum_{q\in\RegFile(\bs) }  y_{\bs,q}.
\end{align*}


The aim of our model is to put a soft limit on the maximum traffic associated with a CBS while balancing the cache-related traffic volume among the CBSs. This approach also guarantees the \emph{usefulness} of each cache.  We measure the usefulness of a CBS $\bs$ by the utility function   $\util_{\bs}$ which takes  the traffic volume routed to the CBS as its argument.

 Following the concept of diminishing returns, $\util_{\bs}$  is  increasing and concave as well as  continuously differentiable. Additionally, we can choose the utility function such that for volumes greater than a certain $V_{\bs}$, the derivative of $\util_{\bs}$ becomes close to zero. Such soft limitation entails that  it is not beneficial to further route users to this station. The overall objective is to maximize the sum of utilities subject to routing constraints. 
  The Convex Program  formulation of the Cache Routing Problem is 
 \begin{align*}
(\text{CRP})	&\max_{0\leq  \mathbf{y}_{\bs} \leq \mathbf{N}_{\bs},  \bs\in\BS } && \sum_{\bs\in\BS} \util_{\bs}( v_{\bs}(\mathbf{y}_{m}) )&\nonumber\\
	&\text{s.\ t.} && \sum_{\bs\in \coverBS(q)}y_{\bs,q} = \numUT_{q}, &\fa q\in\RegFile,
\end{align*}
  where $\mathbf{N}_{\bs} \coloneqq (\numUT_{\bs,q}), q\in\RegFile(\bs)$ is the vector of expected requests for the region-files covered by $\bs$. Note that we  introduce an equality constraint instead of the inequality  in \eqref{CRP:routing}, since the utility functions are increasing. There is, thus, no cache-related traffic  in the optimal solution that remains unrouted.

Traffic association is balanced  if the available resources are used in a fair way. Some notions of fairness are max-min, $\alpha-$ and proportional fairness. Each of them is achieved by appropriate choice of the utility functions (see~\cite{Kelly1997}\cite{ Mo2000}). E.g.\ for proportional fairness, utilities could be chosen as (weighted) logarithms, depending also on the soft limit we want to achieve.

In the next section, we  derive the procedure that achieves the solution to the CRP for general utilities, leaving their specific choice to the network designer.

\section{Solution}
\label{section:solution}

The CRP is solved with a distributed algorithm which consists of three nested loops. 

\subsection{Dual method for the Augmented Lagrangian}

The CRP is solved using the dual method  on the Augmented Lagrangian (see~\cite{Bertsekas1989}, Section 3.4.4). We use the Augmented instead of the regular Lagrangian  to achieve a distributed solution. In our case, the regular Lagrangian  is not appropriate since it is not strictly concave in the primal variables and hence the primal solution is not unique. This creates conflicts when different stations compete for the same users and convergence cannot be guaranteed.
Like the regular Lagrangian, the Augmented one relaxes  constraints of the  CRP and  introduces a \emph{price} $\lambda_{q}$ for the violation of each constraint. The difference between them is an additional quadratic term penalizing the violation of each constraint together with a factor $\varrho>0$. This penalty guarantees  strict concavity in the primal variables. 
Denoting the Augmented Lagrangian by $L^{(\varrho)}$, we get
\begin{align}
L^{(\varrho)}(\mathbf{y},\bm{\lambda}) = &\sum_{\bs\in\BS} \util_{\bs}( v_{\bs}(\mathbf{y}_{\bs}) ) \nonumber\\ &-\sum_{q\in\RegFile}  \lambda_{q}(  \numUT_{q} - \sum_{\bs\in \coverBS(q)}y_{\bs,q} ) \nonumber\\\label{CRP-AugLag}&- \frac{\varrho }{2}\sum_{q\in\RegFile}(\numUT_{q}-\sum_{\bs\in\coverBS(q)} y_{\bs,q})^2,
\end{align}
where $\bm{\lambda}\coloneqq(\lambda_{q}), q\in\RegFile$ is the price vector.
The domains of the dual variables  are $\lambda_{q}\in\R$ for all $q\in\RegFile$, since the respective constraints are equalities. 

The Duality theorem (see~\cite{Bertsekas1989}, Appendix C) applies, which means that the duality gap is 0, and the dual method can be used.  
The objective function of the dual problem  is 
\begin{align*}
D^{(\varrho)}(\bm{\lambda}) &\coloneqq \max_{0\leq  \mathbf{y}_{\bs} \leq \mathbf{N}_{\bs}, \bs\in\BS } L^{(\varrho)}(\mathbf{y},\bm{\lambda}) = L^{(\varrho)}(\mathbf{y^{\ast}}(\bm{\lambda}), \bm{\lambda}),
\end{align*}
where  
\begin{align}
\label{CRP-primal}
&&\mathbf{y^{\ast}}(\bm{\lambda}) = \argmax_{0\leq  \mathbf{y}_{\bs} \leq \mathbf{N}_{\bs}, \bs\in\BS } L^{(\varrho)}(\mathbf{y},\bm{\lambda})
\end{align}
is the primal maximum of \eqref{CRP-AugLag} for a given price vector $\bm{\lambda}$. 
The dual problem is then defined as
\begin{align*}
(\text{CRP-dual})	&&\min_{\bm{\lambda}\in\R^{\RegFile}} D_{\varrho}(\bm{\lambda}).
\end{align*}

Starting from an arbitrary initial dual vector $\bm{\lambda}(0)$, the dual vector is iteratively updated  according to
\begin{align}
\lambda_{q}(t+1) = \lambda_{q}(t) + \varrho\Big(\sum_{\bs\in\coverBS(q)} \numUT_{q} -  y_{\bs, q}^{\ast}(\bm{\lambda}(t)) \Big),
\label{crp:dual-update}
\end{align}
where the steplength $\varrho>0$ is the penalty used in \eqref{CRP-AugLag}.
The convergence of this method is well known (see~\cite{Ruszczynski1995} or Section 3.4.4 of \cite{Bertsekas1989}). 

For practical implemetation issues of each update step of the region-file price $\lambda_{q}, q\in\RegFile$, only the primal solutions of the covering CBSs need to be known. Thus, for a distributed implementation, exchange of such  information among neighboring stations is  sufficient.

The next subsection presents the distributed solution for the primal problem \eqref{CRP-primal}, which needs to be found for every iteration of the dual algorithm.


\subsection{Distributed solution for the primal problem}

The solution for  \eqref{CRP-primal}  is unique since the domain of  $\mathbf{y}$ is convex and compact and,  for any fixed feasible vector $\bm{\lambda}$, the Augmented Lagrangian $L^{(\varrho)}$ is strictly concave. 
We use the Diagonal Quadratic Approximation Method (DQA)~\cite{Ruszczynski1995} to derive seperate problems which can be solved by each cache. A limited amount of exchanged information  between neighboring caches is required. 

The DQA overcomes the problem that the objective function $L^{(\varrho)}(\mathbf{y},\bm{\lambda})$   of~\eqref{CRP-primal} is not easily separable among the variables related to the different CBSs, since it contains  quadratic terms combining different variables $y_{\bs, q}$ (see \eqref{CRP-AugLag}). 
To achieve this, we  introduce the functions $L^{(\varrho)}_{\bs}:\R^{\numRegFiles_{\bs}}\times\R^{\sum_{\tilde{\bs}}\numRegFiles_{\tilde{\bs}}}\times \R^{\numRegFiles}\rightarrow\R$ for all $\bs\in\BS$:
\begin{align*}
L^{(\varrho)}_{\bs}(\mathbf{y}_{\bs},\mathbf{\tilde{y}},\bm{\lambda}) \coloneqq & \util_{\bs}( v_{\bs}(\mathbf{y}_{\bs}) )  + \sum_{q\in\RegFile(m)}  \lambda_{q} y_{\bs,q}  \nonumber\\&- \frac{\varrho }{2}\sum_{q\in\RegFile(m)}(\bar{\numUT}^{m}_{q}(\mathbf{\tilde{y}})- y_{\bs,q})^2,
\end{align*}
where 
$
\bar{\numUT}^{m}_{q}(\mathbf{\tilde{y}}) \coloneqq \numUT_{q}  -  \sum_{\substack{\bar{\bs}\in\coverBS(q)\\ \bar{\bs} \neq \bs}}  \tilde{y}_{\bar{\bs}, q}
$
is the number of requests in $q$ not associated with caches other than $\bs$ in the  routing vector $\mathbf{\tilde{y}}=(\tilde{y}_{\bs, q}), \bs\in\BS, q\in\RegFile(\bs)$ which is here seen as a  parameter.
The primal problem to be solved by each cache $\bs$ is defined as
\begin{align}
\label{CRP:primal-sep-def}
(\text{CRP-primal-$\bs$})\quad \max_{0\leq \mathbf{y}_{m} \leq \mathbf{N}_{m}} L^{(\varrho)}_{\bs}(\mathbf{y}_{\bs},\mathbf{\tilde{y}},\bm{\lambda}).
\end{align} 

Since $L^{(\varrho)}_{\bs}(\mathbf{y}_{\bs},\mathbf{\tilde{y}},\bm{\lambda})$ is strictly concave in $\mathbf{y}$ and the domain is compact, CRP-primal-$\bs$ has a unique solution which we call $\mathbf{\tilde{y}^{\ast}}_{\bs}$. The vector containing  the solutions of CRP-primal-$\bs$ for all caches is $\mathbf{\tilde{y}^{\ast}}$. 

The DQA method consists of parallel execution of CRP-primal-$\bs$ at the caches with consecutive update of the vector $\mathbf{\tilde{y}}$ in the fashion of a nonlinear Jacobi algorithm. It  produces a succession of vectors $\mathbf{\tilde{y}}(0), \mathbf{\tilde{y}}(1), \mathbf{\tilde{y}}(2),\ldots$. Starting from some given vector $\mathbf{\tilde{y}}(0)$, the vector  $\mathbf{\tilde{y}}(\tau + 1)$ is defined as the convex combination of $\mathbf{\tilde{y}}(\tau )$ and $\mathbf{\tilde{y}^{\ast}}(\tau)$. Given a constant $0<\alpha\leq 1$, we get
\begin{align}
\mathbf{\tilde{y}}(\tau + 1) = \mathbf{\tilde{y}}(\tau) + \alpha(\mathbf{\tilde{y}^{\ast}}(\tau)-\mathbf{\tilde{y}}(\tau)).
\label{crp:primal-update}
\end{align}
In~\cite{Ruszczynski1995} it is shown that the DQA method converges. Observe that the convergence depends on the uniqueness of the primal solutions $\mathbf{\tilde{y}^{\ast}}(\tau)$. 
For every update~\eqref{crp:primal-update}, each station only requires results from its neighboring stations that cover a common region-file. 

\subsection{Bucket-filling}

What  is left  is to find an efficient solution to \mbox{CRP-primal-$\bs$}~\eqref{CRP:primal-sep-def} running on each CBS separately:
In order to do so, we develop  a novel technique we call \emph{Bucket-filling}. First, we can  simplify our notation 
since the problem is separated by cache, and the vectors $\mathbf{\tilde{y}}$ and $\bm{\lambda}$ are parameters. In this subsection, ${\mathbf{y} \coloneqq \mathbf{y}_{m}}$, ${\util\coloneqq \util_{\bs}}$, ${\numUT\coloneqq \numUT_{m}}$, ${v\coloneqq v_{\bs}}$, $\RegFile\coloneqq\RegFile(\bs)$, $\bar{\numUT}_{q}\coloneqq\bar{\numUT}^{m}_{q}(\mathbf{\tilde{y}})$.
Then, the solution to \eqref{CRP:primal-sep-def} can be rewritten as 
\begin{align*}
\mathbf{y^{\ast}}=&\,\argmax_{0\leq \mathbf{y} \leq \mathbf{N}} \, \util( v(\mathbf{y}) )  + \sum_{q\in\RegFile}  \lambda_{q} y_{q}   - \frac{\varrho }{2}\sum_{q\in\RegFile}(\bar{\numUT}_{q}- y_{q})^2\\
=&\,\argmax_{0\leq \mathbf{y} \leq \mathbf{N}} \, \util\Big( \sum_{q\in\RegFile} y_{q} \Big) - \sum_{q\in\RegFile} \left[  \frac{\varrho }{2} y_{q}^{2} -  (\lambda_{q} + \varrho \bar{\numUT}_{q}) y_{q}\right].
\end{align*}
The last step comes from the development of the quadratic term and from omitting the  additive constants which do not affect  the optimal solution.
Further defining  $\crpConst_{q} \coloneqq \lambda_{q} + \varrho \bar{\numUT}_{q}$, the CRP-primal-$\bs$ can be stated as 
\begin{align}
\label{CRP:prim-approx-reform}
\mathbf{y^{\ast}} = \,\argmax_{0\leq \mathbf{y} \leq \mathbf{N}} \quad g(\mathbf{y})
\end{align}
with
$
g(\mathbf{y})\coloneqq \util\Big( \sum_{q\in\RegFile} y_{q} \Big)  - \sum_{q\in\RegFile} \left[  \frac{\varrho }{2}  y_{q}^{2} -  \crpConst_{q} y_{q}\right].
$

We will now show that one known value in the optimal routing vector is sufficient to derive the entire solution of CPR-primal-$\bs$. This observation allows to develop  the Bucket-filling technique.

 Given the  optimal vector $\mathbf{y^{\ast}}$, we define the function $g_{q}(y):\left[0,\numUT_{q}\right]\rightarrow\R$ for  $q\in\RegFile$ as
\begin{align*}
g_{q}(y)\coloneqq & \util\Big(y + \sum_{\tilde{q}\neq q} y_{\tilde{q}}^{\ast} \Big) - \left[  \frac{\varrho }{2}  y^{2} -  \crpConst_{q} y\right] \nonumber\\&- \sum_{\tilde{q}\neq q} \left[  \frac{\varrho }{2} ( y_{\tilde{q}}^{\ast})^{2} -  \crpConst_{\tilde{q}} y_{\tilde{q}}^{\ast}\right].
\end{align*}
The maximum of this function is  the optimal solution $y^{\ast}_{q}$. 
In other words,
$
\argmax_{0\leq y \leq \numUT_{q}} \, g_{q}(y) = y_{q}^{\ast}.
$

Furthermore, note that $g_{q}$ is continuously differentiable, strictly concave and defined on a compact set which implies that it has a unique maximum. We  denote the unique root of its  derivative
\begin{align}
\label{CRP:prim-approx-deriv}
g^{\prime}_{q}(y) = \util^{\prime}\Big(y + \sum_{\tilde{q}\neq q} y_{\tilde{q}}^{\ast} \Big) - \left[  \varrho   y -  \crpConst_{q} \right]
\end{align}
by $y^{\prime}_{q}$.
Then we get 
\begin{align}
0\leq y^{\prime}_{q}\leq \numUT_{q} &\implies y^{\ast}_{q} = y^{\prime}_{q},\nonumber\\
y^{\prime}_{q} < 0 &\implies y^{\ast}_{q} = 0, \label{CRP:prim-approx-obs}\\
y^{\prime}_{q} > \numUT_{q} &\implies y^{\ast}_{q} = \numUT_{q}.\nonumber
\end{align}


Now, we can formulate the following key lemma:
\begin{lemma}
\label{crp-main-lemma}
Let $\mathbf{y^{\ast}}$ be defined as in~\eqref{CRP:prim-approx-reform}. 
Let $i,j\in\RegFile$ with  
$0<y^{\ast}_{i}<\numUT_{i}$, $0<y^{\ast}_{j}<\numUT_{j}$. Then
\begin{align*}
y_{i}^{\ast} =  \frac{\crpConst_{i}-\crpConst_{j}}{\varrho} + y_{j}^{\ast}.
\end{align*}
\end{lemma}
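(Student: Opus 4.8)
The plan is to read off the claim from the first-order optimality conditions already recorded in \eqref{CRP:prim-approx-obs}. Fix $i\in\RegFile$ with $0<y^{\ast}_i<\numUT_i$. Since $g_i$ differs from $g$ only by additive constants (the terms with $\tilde q\neq i$), maximizing $g_i$ over its single coordinate with the remaining coordinates frozen at $y^{\ast}_{\tilde q}$ indeed returns $y^{\ast}_i$; moreover $g_i$ is strictly concave and continuously differentiable on the compact interval $[0,\numUT_i]$, so $y^{\ast}_i$ is its unique maximizer. Because this maximizer is strictly interior, the first line of \eqref{CRP:prim-approx-obs} gives $y^{\ast}_i=y'_i$, i.e.\ $g'_i(y^{\ast}_i)=0$. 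Evaluating \eqref{CRP:prim-approx-deriv} at $y=y^{\ast}_i$ and noting that $y^{\ast}_i+\sum_{\tilde q\neq i}y^{\ast}_{\tilde q}=\sum_{q\in\RegFile}y^{\ast}_q=v(\mathbf{y^{\ast}})$, this stationarity condition reads
\[
\util^{\prime}\bigl(v(\mathbf{y^{\ast}})\bigr)=\varrho\, y^{\ast}_i-\crpConst_i .
\]

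Applying the identical argument to the index $j$, using $0<y^{\ast}_j<\numUT_j$, yields
\[
\util^{\prime}\bigl(v(\mathbf{y^{\ast}})\bigr)=\varrho\, y^{\ast}_j-\crpConst_j .
\]
The left-hand sides of the two displays are literally the same number: the argument of $\util^{\prime}$ is the total load $v(\mathbf{y^{\ast}})$, which is independent of which interior coordinate is singled out. Equating the right-hand sides gives $\varrho\, y^{\ast}_i-\crpConst_i=\varrho\, y^{\ast}_j-\crpConst_j$, and dividing by $\varrho>0$ produces exactly $y^{\ast}_i=\frac{\crpConst_i-\crpConst_j}{\varrho}+y^{\ast}_j$.

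I do not expect a genuine obstacle here; the proof is a two-line manipulation of the KKT stationarity condition. The only step that deserves an explicit sentence is the passage $y^{\ast}_i=y'_i$ (and likewise for $j$): one must point out that strict interiority of the optimum, together with the strict concavity used to define $y'_q$ as the unique root of $g'_q$, forces the unconstrained stationarity condition to hold — which is precisely the content of the first implication in \eqref{CRP:prim-approx-obs}. Everything else is bookkeeping, in particular the observation that the total $\sum_{q\in\RegFile}y^{\ast}_q$ appearing inside $\util^{\prime}$ is the same common constant for both $i$ and $j$.
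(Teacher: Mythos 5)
Your proof is correct and follows essentially the same route as the paper: both invoke the first implication of \eqref{CRP:prim-approx-obs} to get $g'_i(y^{\ast}_i)=g'_j(y^{\ast}_j)=0$, observe that the $\util'$ term is evaluated at the same total $\sum_{q}y^{\ast}_q$ in both cases so it cancels, and equate $\varrho y^{\ast}_i-\crpConst_i=\varrho y^{\ast}_j-\crpConst_j$. Your write-up merely makes the cancellation of $\util'\bigl(v(\mathbf{y^{\ast}})\bigr)$ more explicit than the paper does.
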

\begin{proof}
From \eqref{CRP:prim-approx-obs} we know that $g^{\prime}_{i}(y_{i}^{\ast}) = 0$ and $g^{\prime}_{j}(y_{j}^{\ast}) = 0$ and thus $g^{\prime}_{i}(y_{i}^{\ast}) = g^{\prime}_{j}(y_{j}^{\ast}) $. 
From~\eqref{CRP:prim-approx-deriv}, we get
\begin{align*}
\varrho   y_{i}^{\ast} -  \crpConst_{i} =    \varrho   y_{j}^{\ast} -  \crpConst_{j} 
\end{align*}
The statement follows from a simple calculation.
\end{proof}

The Lemma allows us to characterize the entire optimal solution just from one  value 
$0<y^{\ast}_{q}<\numUT_{q}$.

\begin{cor}
\label{obs-next-increase}
The steepest increase of $g$  (as in \eqref{CRP:prim-approx-reform}) is achieved by increasing  the region-file $q$ with the highest value $\crpConst_{q}$.
For two different $q, \hat{q}\in\RegFile$, the gradients of $g_{q}$ and $g_{\hat{q}}$ become equal once $y_{q}$ is increased by the value $(\crpConst_{q}-\crpConst_{\hat{q}})/\varrho$. 
\end{cor}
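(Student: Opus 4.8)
The plan is to read off both claims directly from the structure of the derivative $g'_q$ in~\eqref{CRP:prim-approx-deriv} together with Lemma~\ref{crp-main-lemma}. For the first claim, I would start from the fact that at a point $\mathbf{y}$ with $\sum_{q} y_q = v$ fixed, the marginal gain of routing an extra unit of traffic to region-file $q$ is the partial derivative $\partial g/\partial y_q = \util'(v) - \varrho y_q + \crpConst_q$. The term $\util'(v)$ is common to all region-files, so comparing two candidates $q$ and $\hat q$ at the \emph{same} current load ($y_q = y_{\hat q}$, in particular at the all-zero start $\mathbf{y}=\mathbf{0}$) the difference of marginal gains is exactly $\crpConst_q - \crpConst_{\hat q}$. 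Hence the region-file with the largest $\crpConst_q$ has the steepest ascent direction; this is the first sentence of the corollary. I should phrase this carefully as a statement about the initial/common-load comparison, since once the $y_q$ differ the $-\varrho y_q$ terms also enter.

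For the second claim I would argue as follows. Fix $\hat q$ and hold $y_{\hat q}$ (and all other coordinates) constant, and increase only $y_q$ from its current value $y_q^{(0)}$. The gradient component along $q$ is $\util'(v_0 + \Delta) - \varrho(y_q^{(0)} + \Delta) + \crpConst_q$ where $\Delta$ is the amount added and $v_0$ the current total, while the gradient component along $\hat q$ is $\util'(v_0+\Delta) - \varrho y_{\hat q} + \crpConst_{\hat q}$ (the $\util'$ argument changes identically because the total load changes by $\Delta$ in both). Setting these equal, the $\util'$ terms cancel and one is left with $\varrho(y_q^{(0)} + \Delta) - \crpConst_q = \varrho y_{\hat q} - \crpConst_{\hat q}$; taking the natural situation $y_q^{(0)} = y_{\hat q}$ (equal starting loads), this gives $\Delta = (\crpConst_q - \crpConst_{\hat q})/\varrho$, which is the displayed value in the corollary. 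This is essentially the same cancellation that produced Lemma~\ref{crp-main-lemma}: at the optimum the common $\util'$ term drops out and one is left with the linear relation $\varrho y_i^\ast - \crpConst_i = \varrho y_j^\ast - \crpConst_j$, so the ``balancing gap'' between two coordinates is always $(\crpConst_i-\crpConst_j)/\varrho$.

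The only subtlety — and the main thing I would be careful about — is the implicit ``equal current load'' assumption: the cleanest reading is that we start Bucket-filling from $\mathbf{y}=\mathbf{0}$, pour into the region-file with the largest $\crpConst_q$, and the gradient of $g_q$ meets that of $g_{\hat q}$ precisely when $y_q$ has reached $(\crpConst_q-\crpConst_{\hat q})/\varrho$ (assuming neither of the box constraints $0\le y_q\le\numUT_q$ has become active in the meantime, which is the content of~\eqref{CRP:prim-approx-obs}). I would state this hypothesis explicitly rather than leave it buried, and then the corollary follows by the one-line cancellation above. This is exactly what justifies the Bucket-filling metaphor: the $\crpConst_q$ values are the initial ``heights'', and water poured into the tallest bucket rises until it reaches the rim of the next, the required amount being the height difference divided by $\varrho$.
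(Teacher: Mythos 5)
Your proof is correct and follows the same route the paper intends (the paper states the corollary without proof, as an immediate consequence of the derivative formula~\eqref{CRP:prim-approx-deriv} and the cancellation of the common $\util'$ term that also drives Lemma~\ref{crp-main-lemma}). Your explicit flagging of the ``equal current load'' hypothesis --- which is satisfied in the Bucket-filling algorithm since all $y_q$ start at zero and active region-files are raised to a common level --- is a worthwhile clarification of a point the paper leaves implicit.
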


We can now describe the novel Bucket-filling algorithm that efficiently finds the optimal solution to CRP-primal-$\bs$: 
\begin{algorithm}[H]
\caption{Bucket-filling}
\begin{algorithmic}[1]
\STATE Sort $\RegFile$ by $\crpConst_{q}$ non-increasingly. \label{algo-sort}
\STATE Declare all $q\in\RegFile$ with maximum $\crpConst_{q}$ as \emph{active}.\label{algo-activate} 
\STATE \label{algo-increase} Increase $y_{q}$ of active $q$ equally (because of Lemma~\ref{crp-main-lemma}) until
\begin{compactitem}
\item another $q$ becomes active by Corollary~\ref{obs-next-increase},
\item a $q$ becomes inactive by $y_{q}$ reaching $\numUT_{q}$, or
\item  $g^{\prime}(\mathbf{y})=0$ for all active region-files $q$.
\end{compactitem} 
\STATE If the last condition is fulfilled, or  $y_{q} = \numUT_{q}$ for all $q$, then terminate. Otherwise, go to step~\ref{algo-increase}. \label{algo-terminate} 
\end{algorithmic}
\label{algo:bucket-filling}
\end{algorithm}


The algorithm is illustrated in Fig.~\ref{fig:crp}. It terminates and returns the optimal solution $\mathbf{y^{\ast}}$ defined in \eqref{CRP:prim-approx-reform}.

\begin{figure}[h]
\centering
\includegraphics[width=\columnwidth]{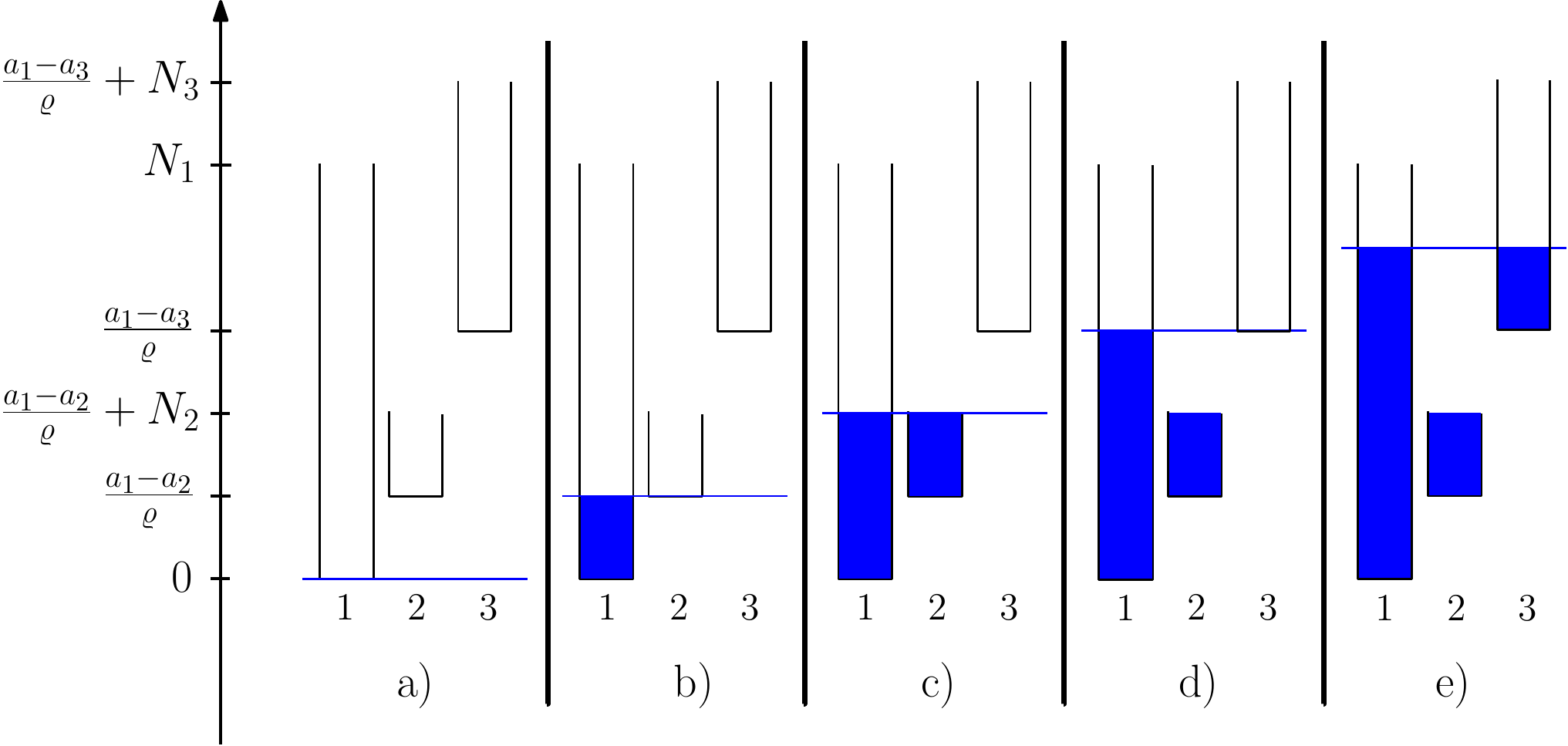}
\caption{Illustration of the Bucket-filling algorithm. Each bucket $q$ is placed with its bottom at level $\frac{\crpConst_{\max}-\crpConst_{q}}{\varrho}$ and has the height $\numUT_{q}$. The buckets are filled to a common level -- or until they are full. The algorithm halts when further increasing the water levels starts decreasing the objective value.}
\label{fig:crp}
\end{figure}

Observe that step~\ref{algo-sort} can be done in $O(\left\vert\RegFile\right\vert \log(\left\vert\RegFile\right\vert))$ operations with a sorting algorithm such as quicksort. Step~\ref{algo-activate} runs in  $O(\left\vert\RegFile\right\vert)$. Step~\ref{algo-increase} is executed up to $2\left\vert\RegFile\right\vert$ times, since every region-file is activated and deactivated no more than one time each. Assuming that $f$ can be evaluated in $O(1)$, this implies $O(\left\vert\RegFile\right\vert)$ as the asymptotic runtime for steps~\ref{algo-increase} and \ref{algo-terminate}. This shows that the  overall runtime is dominated by sorting in step~\ref{algo-sort} and thus is  $O(\left\vert\RegFile\right\vert \log(\left\vert\RegFile\right\vert))$.


This algorithm is executed in every inner loop of the DQA method, which is why its efficiency is paramount.

\subsection{Algorithm}
\label{section:algorithm}

To sum up, the complete algorithm that findes the globally optimal solution to the CRP  is
\begin{algorithm}[H]
\caption{Solve CRP}
\begin{algorithmic}[1]
\STATE Choose dual vector~$\bm{\lambda}(0)$, $t=0$, $\varepsilon > 0$. \label{algo2:choose-dual}
\WHILE{$\norm{{\bm{\lambda}(t)} - {\bm{\lambda}(t-1)}} > \varepsilon$ }
	\STATE Choose 	$\mathbf{\tilde{y}}(0)$, $\tau = 0$. \label{algo2:choose-primal}
	\WHILE{$\norm{\mathbf{\tilde{y}}(\tau) - \mathbf{\tilde{y}}(\tau-1)} > \varepsilon$ }
			\STATE Find $\mathbf{\tilde{y}^{\ast}}_{\bs}(\tau)$ with Algorithm~\ref{algo:bucket-filling} at every $\bs\in\BS$ separately.
			\STATE Exchange results among neighboring stations, set $\mathbf{\tilde{y}}(\tau + 1)$ as in \eqref{crp:primal-update}, $\tau = \tau + 1$.
	\ENDWHILE
	\STATE Exchange results among neighboring stations, set ${\bm{\lambda}(t+1)}$ as in \eqref{crp:dual-update} , $t = t + 1$.
\ENDWHILE
\end{algorithmic}
\label{CRP:algo}
\end{algorithm}
The choice of the first dual vector~$\bm{\lambda}(0)$ in line~\ref{algo2:choose-dual} is arbitrary. The first primal vector~$\mathbf{\tilde{y}}(0)$ in line~\ref{algo2:choose-primal} can be chosen as the last primal vector of the iteration before.

\section{Numerical Evaluation}
\label{section:numerical}

For the simulation of the algorithm, we consider an urban area of $2.5$~km $\times$ $2.5$~ km with uniform user distribution. A catalog of $6$ files of equal size is known. The popularity of the files  follows a Zipf distribution with parameter $1$. Throughout the simulations,  CBSs are placed in the area following a Poisson Point Process (PPP) with  density  $8 \frac{\text{CBS}}{km^{2}}$. This means that their total number in each run is a random Poisson realization, and their position is uniform in the simulation window. In this way we avoid the bias of testing only particular network topologies. We run two simulation scenarios, the first for single-tier and the second for two-tier netowrks. Each scenario consists of 1000 simulation runs  and we consider the averaged results over the runs. Coverage follows the Boolean model where a disc area is centered on each wireless station with some defined radius. The surface of different overlapping areas is found in each run by the Monte-Carlo method. The users are routed to the CBSs following three different policies:
\begin{compactenum}
\item The policy from Algorithm~\ref{CRP:algo} with logarithmic utility function for each cache. This policy guarantees a proportionally fair (and also max-min fair) solution. We call this policy \emph{fair}.
\item The  \emph{closest-available} policy, which associates each user with the closest CBS  that both covers its position and has the requested content cached.
\item The \emph{unsplittable} policy which associates all users in a region requesting the same file with a unique random CBS among all covering CBSs having this content.
\end{compactenum}
We want to evaluate the proportion of user traffic served by different  CBSs over the total traffic routed to  CBSs for each policy. In this way we can  compare the policies based on how (un)equally they associate traffic load among the available CBSs.
Observe that for all three policies, the total traffic volume associated  to CBSs is the same, because in all scenarios the CBSs store the same cached content, and  traffic is  routed to a CBS whenever possible. Hence, the comparison is fair. 

\subsection{Single-tier Networks}
In an ideal situation, all stations would serve exactly the same amount of traffic. This, however, is normally not possible in a random network. A routing policy is better than another, when the maximum load of a CBS is lower and at the same time the minimum load share is higher than in the other policy. This way, an overload of the stations is avoided while the usefulness of all stations is achieved. 
Simulating single-tier networks, we want to verify that the fair policy provides a more balanced distribution of traffic to CBSs than the other two. The coverage radius of the CBSs is varied between 62.5 m and 500 m. This can be translated to an expected number of covering CBSs per user between 1 and 6. This mapping comes from the Boolean model \cite{Blaszczyszyn2014}. Two different sets of content files with different popularities are placed uniformly randomly into the caches. Since we are only interested in traffic associated with  CBSs, we disregard users not covered by any CBS. 

Fig.~\ref{fig:radius-result}  illustrates how the routing decisions of each of the three policies affect the distribution of load shares among all CBSs. It displays the average maximum  (upper curve) and minimum load share (lower curve)   over the mean number of CBSs a covered user can see. 



\begin{figure}[h]
\centering
\includegraphics[width=\columnwidth]{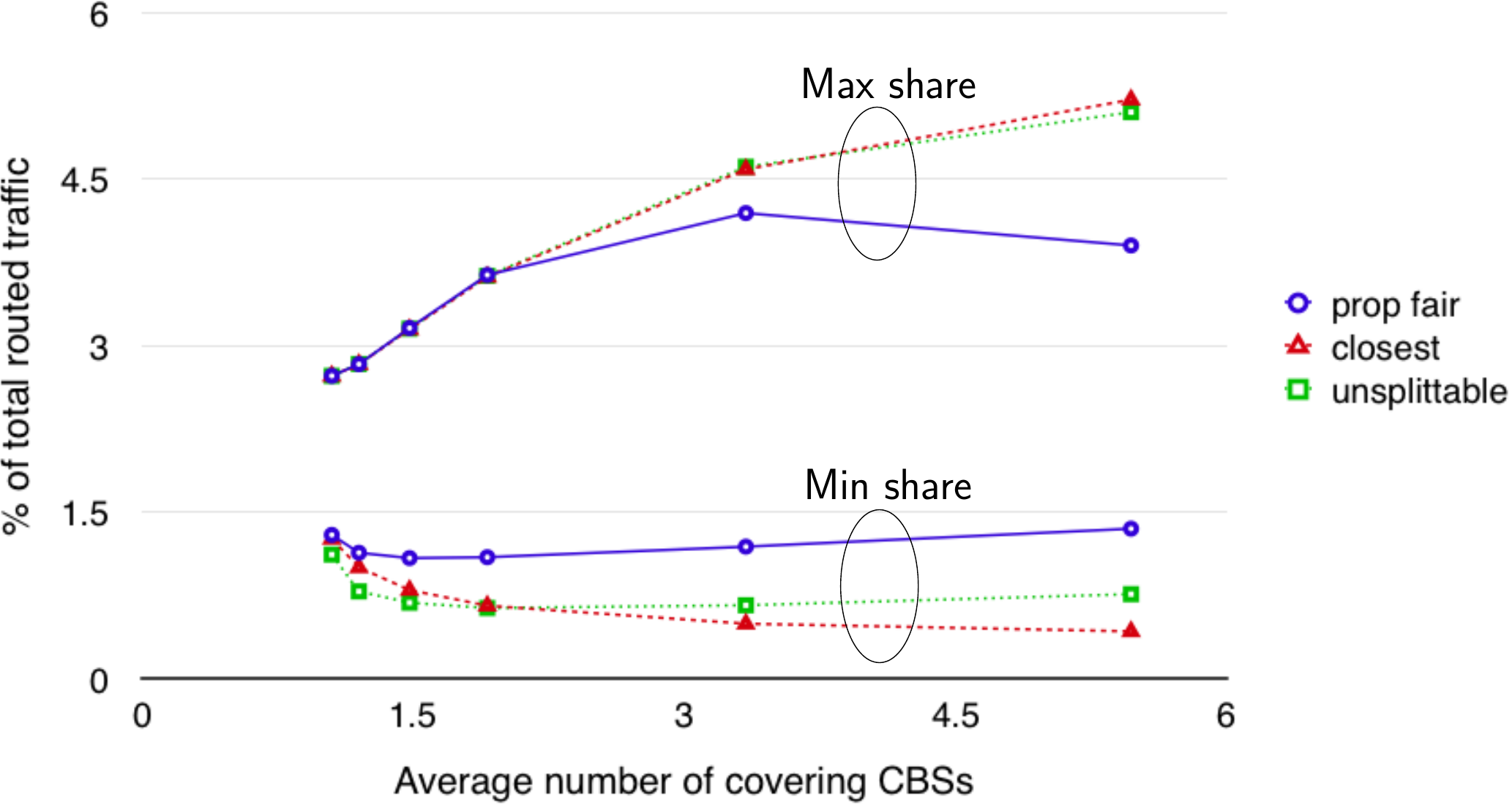}
\caption{Minimum and maximum load share of a CBS in the network depending on the mean coverage number.}
\label{fig:radius-result}
\end{figure}

The results show that with an increasing average number of covering stations, the fair policy achieves a lower maximum load as well as a higher minumum load. Since the overall traffic routed to the CBSs is the same for all policies, we can conclude that the  fair policy makes the most balanced use of the available resouces. The resources which remain available for potential cache-unrelated traffic are spread evenly across the network.

\subsection{Two-tier Network}

In a second scenario, we simulate an area covered by two tiers~\cite{DiRenzo2013}: one of large and one of small coverage.  We show that the fair policy is better at offloading traffic from larger to smaller CBSs than the other policies (see Fig.~\ref{fig:small-result}). While the fair policy burdens the small stations with a higher load, we  demonstrate that it distributes  the load   more evenly among them so that no individual station is overburdened (see Fig.~\ref{fig:small-minmax}).
The first tier consists of  large CBSs having a 187.5 m coverage radius while the second tier of small CBSs has 62.5 m coverage radius. The large CBSs are equipped with caches and  the two most popular files are stored  in all of them. The smaller CBSs get one of these two files assigned uniformly randomly in their cache.  

In Fig.~\ref{fig:small-result}, we show the percentage of all traffic  routed to large CBSs for each of the three policies. The x-coordinate increases with the ratio of  small stations over large stations in the network. When less traffic is routed to the large CBSs, then the policy provides a more efficient offloading of traffic towards the small CBSs. 
The figure shows that, increasing the amount of small CBSs, the fair policy offloads significantly more traffic to the small stations than  the unsplittable policy, and slightly more than the closest-available policy.


\begin{figure}[h]
\centering
\includegraphics[width=\columnwidth]{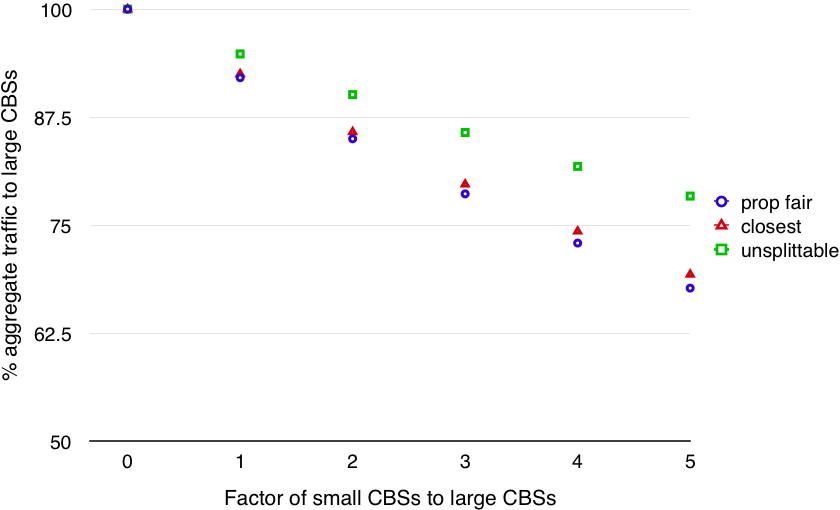}
\caption{Aggregate traffic share of large CBSs depending on the amount of small CBSs.}
\label{fig:small-result}
\end{figure}

Fig.~\ref{fig:small-minmax} shows (as in Fig.~\ref{fig:radius-result}) the maximum and minimum traffic load routed to a small CBS depending on each policy.  
Even though for the fair policy more users are routed to the small CBSs overall, the maximum load share that one  small CBS  takes is almost the same for all policies.    The increase in traffic load by the fair policy  is distributed to the less loaded small CBSs. This is indicated by the higher minimum load among the stations. When applying either the closest-available or the unsplittable policy these CBSs are underused.   Thus, the fair policy utilizes the available resources better.

\begin{figure}[h]
\centering
\includegraphics[width=\columnwidth]{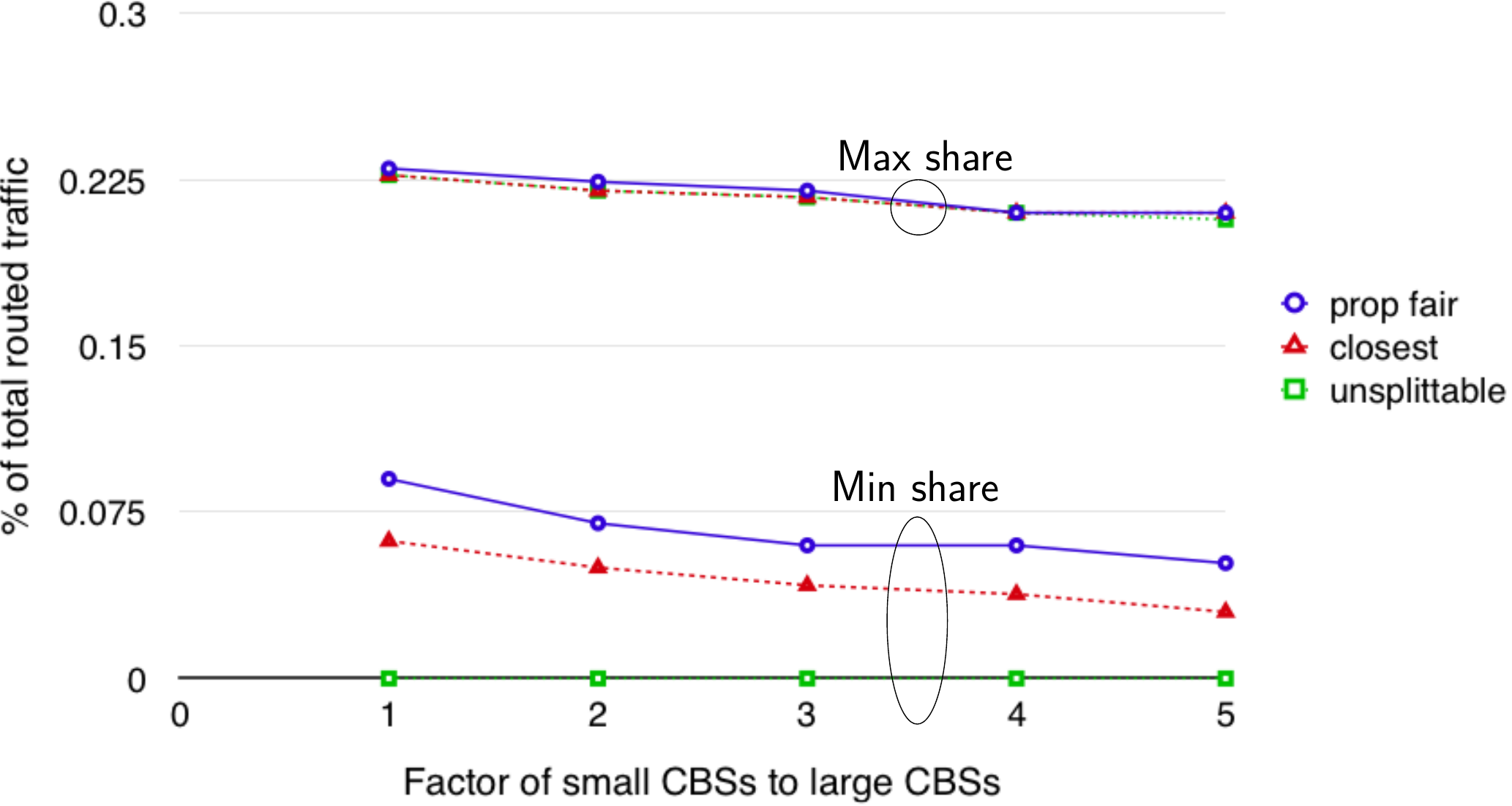}
\caption{Minimum and maximum load share of a small (2nd tier) CBS depending on the ratio of  small CBS number over large CBS number.}
\label{fig:small-minmax}
\end{figure}

\section{Conclusions}
\label{section:conclusion}

We propose a distributed algorithm that optimizes cache-related user traffic association among cache-equipped cellular  stations for a  given  content placement.  The main novelty of the approach is the balancing of cache-related traffic load to evenly use all available resources. The solution procedure makes use of the Augmented Lagrangian  to solve a strictly concave problem. The novel Bucket-filling subroutine takes optimal routing decisions locally at each  station.  With limited  information exchange,  our policy achieves a  fair distribution of users among the  stations.  The efficiency of the algorithm allows for large networks with different sizes of coverage areas and different cache sizes to be solved optimally. Simulations of single-tier and multi-tier networks show that our policy is superior to conventional user traffic association policies in balancing cache-related user traffic among stations.

\bibliographystyle{unsrt}
\bibliography{DocBib}

\begin{thebibliography}{10}

\bibitem{Cisco2016}
Cisco visual networking index: Global mobile data traffic forecast update,
  2015--2020 white paper.
\newblock White Paper, 2 2016.

\bibitem{poularakis2016}
K.~Poularakis, G.~Iosifidis, I.~Pefkianakis, Leandros
  Tassiulas, and M.~May.
\newblock Mobile data offloading through caching in residential 802.11 wireless
  networks.
\newblock {\em IEEE Transactions on Network and Service Management},
  13(1):71--84, 2016.

\bibitem{poularakis2016a}
K.~Poularakis, G.~Iosifidis, V.~Sourlas, and L.~Tassiulas.
\newblock Exploiting caching and multicast for 5g wireless networks.
\newblock {\em IEEE Transactions on Wireless Communications}, 15(4):2995--3007,
  April 2016.

\bibitem{Borst2010}
S.~Borst, V.~Gupta, and A.~Walid.
\newblock Distributed caching algorithms for content distribution networks.
\newblock In {\em Proceedings of the 29th Conference on Information
  Communications}, INFOCOM'10, pages 1478--1486, Piscataway, NJ, USA, 2010.
  IEEE Press.

\bibitem{Caire2013}
K.~Shanmugam, N.~Golrezaei, A.G. Dimakis, A.F. Molisch, and G.~Caire.
\newblock Femtocaching: Wireless content delivery through distributed caching
  helpers.
\newblock {\em Information Theory, IEEE Transactions on}, 59(12):8402--8413,
  Dec 2013.

\bibitem{Blaszczyszyn2014}
B.~Blaszczyszyn and A.~Giovanidis.
\newblock Optimal geographic caching in cellular networks.
\newblock In {\em Communications (ICC), 2015 IEEE International Conference on},
  pages 3358--3363. IEEE, 2015.

\bibitem{Bastug2015}
E.~Bastug, M.~Bennis, and M.~Debbah.
\newblock Cache-enabled small cell networks: Modeling and tradeoffs.
\newblock In {\em Wireless Communications Systems (ISWCS), 2014 11th
  International Symposium on}, pages 649--653, Aug 2014.

\bibitem{Poularakis2014b}
K.~Poularakis, G.~Iosifidis, and L.~Tassiulas.
\newblock Approximation algorithms for mobile data caching in small cell
  networks.
\newblock {\em IEEE Transactions on Communications}, 62(10):3665--3677, 2014.

\bibitem{Dehghan2015}
M.~Dehghan, A.~Seetharam, Bo~Jiang, Ting He, Th. Salonidis, J.~Kurose,
  D.~Towsley, and R.~Sitaraman.
\newblock On the complexity of optimal routing and content caching in
  heterogeneous networks.
\newblock In {\em INFOCOM, IEEE Conference on Computer Communications}, 2015.

\bibitem{Naveen2015}
K.P. Naveen, L.~Massoulie, E.~Baccelli, A.~Carneiro~Viana, and D.~Towsley.
\newblock On the interaction between content caching and request assignment in
  cellular cache networks.
\newblock In {\em Proceedings of the 5th Workshop on All Things Cellular:
  Operations, Applications and Challenges}, AllThingsCellular '15, pages
  37--42, New York, NY, USA, 2015. ACM.

\bibitem{Kelly1997}
F.~Kelly.
\newblock Charging and rate control for elastic traffic.
\newblock {\em European Transactions on Telecommunications}, 1997.

\bibitem{Mo2000}
J.~Mo and J.~Walrand.
\newblock Fair end-to-end window-based congestion control.
\newblock {\em IEEE/ACM Trans. Netw.}, 8(5):556--567, October 2000.

\bibitem{Bertsekas1989}
D.~P. Bertsekas and J.~N. Tsitsiklis.
\newblock {\em Parallel and Distributed Computation: Numerical Methods}.
\newblock Prentice-Hall, Inc., Upper Saddle River, NJ, USA, 1989.

\bibitem{Ruszczynski1995}
A.~Ruszczy{\'n}ski.
\newblock On convergence of an augmented lagrangian decomposition method for
  sparse convex optimization.
\newblock {\em Mathematics of Operations Research}, 20(3):634--656, 1995.

\bibitem{DiRenzo2013}
M.~Di~Renzo, A.~Guidotti, and G.~E. Corazza.
\newblock Average rate of downlink heterogeneous cellular networks over
  generalized fading channels: A stochastic geometry approach.
\newblock {\em IEEE Trans.\ on Comm.}, 61(7):3050--3071, 2013.

\end{thebibliography}
 
\end{document}